\begin{document}
\mainmatter

\title{Evaluating Call-By-Need on the Control Stack}

\author{Stephen Chang,\thanks{Partially supported by grants from the National Science Foundation.} David {Van Horn},\thanks{Supported by NSF Grant 0937060 to the CRA for the CIFellow Project.} and Matthias Felleisen$^{\star}$}
\institute{PLT \& PRL, Northeastern University, Boston, MA 02115}

\maketitle

\markboth{Stephen Chang, David Van Horn, and Matthias Felleisen}
         {Evaluating Call-By-Need on the Control Stack}

\begin{abstract}
 Ariola and Felleisen's call-by-need $\lambda$-calculus replaces a variable
 occurrence with its value at the last possible moment. To support this
 gradual notion of substitution, function applications---once
 established---are never discharged. In this paper we show how to translate
 this notion of reduction into an abstract machine that resolves variable
 references via the control stack. In particular, the machine uses the {\em
 static address\/} of a variable occurrence to extract its current value from
 the {\em dynamic control stack\/}. 
\end{abstract}


\section{Implementing Call-by-need}

Following Plotkin~\cite{Plotkin75callbyname}, Ariola and Felleisen
characterize the by-need $\lambda$-calculus as a variant of $\beta$:
\[
(\lambda x.E[x]) ~ V = (\lambda x.E[V]) ~ V \enspace\text,
\] 
and prove that a machine is an algorithm that searches for a
(generalized) value via the leftmost-outermost application of this new 
reduction~\cite{Ariola97callbyneed}.

Philosophically, the by-need \lc has two implications:

\begin{enumerate}
\item First, its existence says that imperative assignment isn't truly
  needed to implement a lazy language. The calculus uses only
  one-at-a-time substitution and does not require any store-like
  structure. Instead, the by-need $\beta$ suggests that a variable 
  dereference is the resumption of a continuation of the function call, 
  an idea that Garcia et al.~\cite{Garcia09lazy} recently explored in detail
  by using delimited control operations to derive an abstract machine
  from the by-need calculus. Unlike traditional machines for lazy
  functional languages, Garcia et al.'s machine eliminates the need
  for a store by replacing heap manipulations with control (stack)
  manipulations.

\item Second, since by-need $\beta$ does not remove the application,
  the binding structure of programs---the association of a function parameter
  with its value---remains the same throughout a program's
  evaluation. This second connection is the subject of our paper. 
  This binding structure \emph{is} the control stack, and
  thus we have that in call-by-need, \emph{static} addresses can be
  resolved in the \emph{dynamic} control stack.
\end{enumerate}

Our key innovation is the CK+ machine, which refines the
abstract machine of Garcia et al. by making the observation that when
a variable reference is in focus, the location of the corresponding
binding context in the dynamic control stack can be determined by the
lexical index of the variable. Whereas Garcia et al.'s machine
linearly traverses their control stack to find a specific binding
context, our machine employs a different stack organization where indexing 
can be used instead of searching. Our machine organization also simplifies 
the hygiene checks used by Garia et al., mostly because it explicitly 
maintains Garcia et al.'s ``well-formedness'' condition on machine states, 
instead of leaving it as a side condition.

The paper starts with a summary of the by-need \lc and the abstract
textual machine induced by the standard reduction theorem. We then
show how to organize the machine's control stack so that when the
control string is a variable reference, the machine is able to use the
lexical address to compute the location of the variable's binding site
in the control stack.


\section{The Call-by-need \lc, the de Bruijn Version}
\label{debruijn}

The terms of the by-need \lc are those of the
\lc~\cite{Barendregt81lambda}, which we present using de Bruijn's
notation~\cite{deBruijn72lambda}, i.e., lexical addresses replace
variables:
$$M ::= n \mid \lamdb{M} \mid \app{M}{M}$$
where $n \in \mathbb{N}$. The set of values is just the set of abstractions:
$$V ::= \lamdb{M}$$

One of the fundamental ideas of call-by-need is to evaluate the
argument in an application only when it is ``needed,'' and when the
argument \emph{is} needed, to evaluate that argument only
once. Therefore, the by-need calculus cannot use the $\beta$ notion of
reduction because doing so may evaluate the argument when it is not needed, or may cause the argument to be evaluated multiple times. Instead, $\beta$ is replaced with the \emph{deref} notion of reduction:
\begin{align*}
\app{ \lamdbp{\inhole{E}{n}} }{ V }
  \needsp
\app{ \lamdbp{\inhole{E}{V}} }{ V }, \; \textrm{  $\lambda$ binds $n$ }
\tag*{\emph{deref}} 
\end{align*}
The \emph{deref} notion of reduction requires the argument in an
application to be a value and requires the body of the function to
have a special shape. This special shape captures the demand-driven
substitution of values for variables that is characteristic of
call-by-need. In the \emph{deref} notion of reduction, when a
variable is replaced with the value $V$, some renaming may still be
necessary to avoid capture of free variables in $V$, but for now, we
assume a variant of Barendregt's hygiene condition for de Bruijn
indices and leave all necessary renaming implicit.

Here is the set of evaluation contexts $E$:
$$E ::= \hole \mid \app{E}{M} \mid \app{ \lamdbp{E} }{ M } \mid \app{ \lamdbp{\inhole{E'}{n}} }{ E }$$
Like all contexts, an evaluation context is an expression with a hole
in the place of a subexpression. The first evaluation context is an
empty context that is just a hole. The second evaluation context
indicates that evaluation of applications proceeds in a
leftmost-outermost order. This is similar to how evaluation proceeds
in the by-name \lc~\cite{Plotkin75callbyname}. Unlike call-by-name,
however, call-by-need defers dealing with arguments until absolutely
necessary. It therefore demands evaluation within the body of a
let-like binding. The third evaluation context captures this
notion. This context allows the \emph{deref} notion of reduction to
search under applied $\lambda$s for variables to substitute. The
fourth evaluation context explains how the demand for a parameter's
value triggers and directs the evaluation of the function's
argument. In the fourth evaluation context, the visible $\lambda$ binds 
$n$ in $\lamdb{\inhole{E'}{n}}$. This means that there are $n$ additional 
$\lambda$ abstractions in $E'$ between $n$ and its binding $\lambda$.

To make this formal, let us define the function $\Delta : E \rightarrow \mathbb{N}$ as:

\begin{center}
\begin{tabular}{rclrcl}
$\Delta(\hole)$ & = & 0 & \hspace{1cm}
$\Delta(\app{ \lamdbp{\inhole{E'}{n}} }{E})$ & = & $\Delta(E)$ \\
$\Delta(\app{E}{M})$ & = & $\Delta(E)$ & 
$\Delta(\app{ \lamdbp{E} }{ M })$ & = & $\Delta(E) + 1$ \\
\end{tabular}
\end{center}

\noindent With $\Delta$, the side condition for the fourth evaluation
context is $n = \Delta(E')$.

Unlike $\beta$, \emph{deref} does not remove the argument from a
term when substitution is complete. Instead, a term $\app{ \lamdbp{M}
}{ N }$ is interpreted as a term $M$ and an environment where the
variable (index) bound by $\lambda$ is associated with $N$. Since
arguments are never removed from a by-need term, reduced terms are not
necessarily values. In the by-need \lc, reductions produce ``answers''
$a$ (this representation of answers is due to Garcia et
al.~\cite{Garcia09lazy}):
\begin{align*}
a &::= \inhole{A}{V} 
\tag*{answers} \\
A &::= \hole \mid \app{ \lamdbp{A} }{ M }
\tag*{answer contexts}
\end{align*}
Answer contexts $A$ are a strict subset of evaluation contexts $E$.

Since both the operator and the operand in an application reduce to answers, two additional notions of reduction are needed:
%
%
\begin{align*}
\app{ \app{ \lamdbp{\inhole{A}{V}} }{ M } }{ N }
  &\needsp
\app{ \lamdbp{\inhole{A}{\app{V}{N}}} }{ M }
\tag*{\emph{assoc-L}} \\
\app{ \lamdbp{\inhole{E}{n}} }{ \appp{ \lamdbp{\inhole{A}{V}} }{ M } }
  &\needsp
\app{ \lamdbp{ \inhole{A}{\app{ \lamdbp{\inhole{E}{n}} }{ V }}} }{ M }, \;\textrm{ if } \Delta(E) = n
\tag*{\emph{assoc-R}} 
\end{align*}

\label{debruijnrenaming}

As mentioned, some adjustments to de Bruijn indices are necessary when
performing substitution in \lc terms. For example, in a \emph{deref}
reduction, every free variable in the substituted $V$ must be
incremented by $\Delta(E)+1$. Otherwise, the indices representing free
variables in $V$ no longer count the number of $\lambda$s between
their occurrence and their respective binding $\lambda$s. Similar
adjustments are needed for the \emph{assoc-L} and \emph{assoc-R}
reductions, where subterms are also pulled under $\lambda$s.

Formally, define a function $\incFVsfnname$ that takes three inputs: a
term $M$, an integer $x$, and a variable (index) $m$, and increments
all free variables in $M$ by $x$, where a free variable is defined to
be an index $n$ such that $n \geq m$. In this paper, we use the
notation $\incFVs{M}{x}{m}$. Here is the formal definition of $\incFVsfnname$:

\begin{center}
\begin{tabular}{rclrcl}
$\incFVs{n}{x}{m}$ & = & $n+x, \textrm{ if } n \geq m$ & \hspace{1cm}
$\incFVs{\appp{M}{N}}{x}{m}$ & = & $\appp{ \incFVsp{M}{x}{m} }{ \incFVsp{N}{x}{m} }$ \\
$\incFVs{n}{x}{m}$ & = & $n, \hspace{6.3mm}\textrm{ if } n < m$ &
$\incFVs{\lamdb{M}}{x}{m}$ & = & $\lamdb{\incFVsp{M}{x}{m+1}}$ \\
\end{tabular}
\end{center}

Using the $\uparrow$ function for index adjustments, the notions of
reduction are:
\begin{align*}
\app{ \lamdbp{\inhole{E}{n}} }{ V }
  &\needsp
\app{ \lamdbp{\inhole{E}{ \incFVs{V}{\addone{\Delta(E)}}{0} }} }{ V }, \;\textrm{ if } \Delta(E) = n
\tag*{\emph{deref}} \\
\app{ \app{ \lamdbp{\inhole{A}{V}} }{ M } }{ N }
  &\needsp
\app{ \lamdbp{\inhole{A}{\app{V}{ \incFVsp{N}{\addone{\Delta(A)}}{0} }}} }{ M }
\tag*{\emph{assoc-L}} \\
\app{ \lamdbp{\inhole{E}{n}} }{ \appp{ \lamdbp{\inhole{A}{V}} }{ M } }
  &\needsp
\app{ \lamdbp{ \inhole{A}{\app{ \incFVsp{\lamdbp{\inhole{E}{n}}}{\addone{\Delta(A)}}{0} }{ V }}} }{ M }, \;\textrm{ if } \Delta(E) = n
\tag*{\emph{assoc-R}} 
\end{align*}
It is acceptable to apply the $\Delta$ function to $A$ because $A$ is
a subset of $E$.


\section{Standard Reduction Machine}

In order to derive an abstract machine from the by-need \lc, Ariola
and Felleisen prove a Curry-Feys-style Standardization Theorem. Roughly, the 
theorem states that a term $M$ reduces to a term $N$ in a canonical manner if 
$M$ reduces to $N$ in the by-need calculus.

The theorem thus determines a state machine for reducing programs to
answers. The initial state of the machine is the program, the
collection of states is all possible programs, and the final states
are answers. Transitions in the state machine are equivalent to
reductions in the calculus:
$$\inhole{E}{M} \onestepneed \inhole{E}{M'}, \textrm{ if } M \needsp M'$$
where $E$ represents the same evaluation contexts that are used to
define the demand-driven substitution of variables in the
\emph{deref} notion of reduction.

The machine is deterministic because all programs $M$ satisfy the
unique decomposition property. This means that $M$ is either an answer
or can be uniquely decomposed into an evaluation context and a redex. Hence, we
can use the state machine transitions to define an evaluator function:
$$
\texttt{eval}_{\textbf{need}}(M) = 
\begin{cases} 
  a,  & \mbox{if }M \multistepneed a \\
  \bot, & \mbox{if for all $M \multistepneed N$, $N \onestepneed L$ }  \\
\end{cases}
$$

\begin{lemma}
\label{totalfunlem1}
$\texttt{eval}_{\textbf{need}}$ is a total function.
\end{lemma}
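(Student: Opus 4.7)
The plan is to prove totality by establishing, for every term $M$, that exactly one of the two clauses in the case analysis applies. This decomposes into two sub-claims: well-definedness (the clauses are mutually exclusive, so at most one value is assigned) and exhaustiveness (at least one clause applies). Both rest on the unique decomposition property alluded to in the text just before the lemma statement.

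First I would prove, by structural induction on $M$, the \emph{unique decomposition lemma}: every term $M$ is either an answer $a = \inhole{A}{V}$, or else admits a unique decomposition $M = \inhole{E}{R}$ where $R$ is a \emph{deref}/\emph{assoc-L}/\emph{assoc-R} redex. The case analysis follows the grammar of terms: an index $n$ at the top level is stuck (a degenerate edge case I would rule out by restricting to closed programs, which the four context productions enforce once embedded); a $\lamdb{M}$ alone is already an answer with $A = \hole$; an application $\app{M_1}{M_2}$ splits by the shape of $M_1$ (a value, an answer context around a value, or something further decomposable) and of $M_2$ when the operator is an applied $\lambda$. The side condition $n = \Delta(E)$ in the fourth evaluation context production makes the choice of $E$ unique, which is the delicate point.

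Next I would use this to show that the single-step relation $\onestepneed$ is a partial function whose domain is exactly the non-answers: if $M$ is not an answer, the unique decomposition yields a unique $E$ and $R$, and $R$ contracts deterministically under \emph{deref}, \emph{assoc-L}, or \emph{assoc-R} (each redex matches exactly one rule since the three redex shapes are syntactically disjoint), giving a unique $M'$ with $M \onestepneed M'$. Conversely, an answer has no redex decomposition and so admits no step. Iterating, the reduction sequence starting at $M$ is a unique maximal chain, finite ending in an answer or infinite.

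Finally I would read off the totality of $\texttt{eval}_{\textbf{need}}$. If the unique chain from $M$ is finite, its last term is an answer $a$, so $M \multistepneed a$ and the first clause fires; moreover, since answers are irreducible, the second clause cannot also hold, so the output is unambiguous. If the chain is infinite, then every reduct $N$ of $M$ lies on the chain and is itself a non-answer, hence admits a step $N \onestepneed L$, triggering the second clause; and the first clause fails because no answer is reachable. The main obstacle is the unique decomposition lemma, specifically verifying that the overlapping-looking productions $\app{\lamdbp{E}}{M}$ and $\app{\lamdbp{\inhole{E'}{n}}}{E}$ partition decompositions uniquely once the $\Delta$ side condition is taken into account; the rest is bookkeeping.
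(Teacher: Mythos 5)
Your proposal is correct and takes essentially the same route as the paper: the paper's entire proof is a one-line appeal to Ariola and Felleisen's standard reduction theorem, whose operative content is precisely the unique-decomposition and determinism argument you spell out (and which the paper itself states informally just before the lemma). The only substantive addition on your side is making explicit the restriction to closed programs to rule out stuck free variables, a hypothesis the paper leaves implicit.
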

\begin{proof}
The lemma follows from the standard reduction
theorem~\cite{Ariola97callbyneed}. 
\qed
\end{proof}


\section{The CK+ Machine}

A standard reduction machine specifies evaluation steps at a
high-level of abstraction. Specifically, at each evaluation step in
the machine, the entire program is partitioned into an evaluation
context and a redex. This repeated partitioning is inefficient because
the evaluation context at any given evaluation step tends to share a large 
common prefix with the evaluation context in the previous step.
To eliminate this inefficiency, Felleisen and Friedman propose the CK
machine~\cite[Chapter 6]{Felleisen09semantics}, an implementation for a standard reduction machine of a call-by-value language. Consider the following call-by-value evaluation:
\begin{align*}
&\appp{ \lamp{w}{w} }{ \apppu{ \lamp{x}{ \appp{ x }{ \appp{ \lamp{y}{y} }{ \lam{z}{z} } } } }{ \lam{x}{x} } }\\
\longmapsto_v \; &\appp{ \lamp{w}{w} }{ \appp{ \lamp{x}{x} }{ \apppu{ \lamp{y}{y} }{ \lam{z}{z} } } } \\
\longmapsto_v \; &\appp{ \lamp{w}{w} }{ \apppu{ \lamp{x}{x} }{ \lam{z}{z} } } \\
\longmapsto_v \; &\apppu{ \lamp{w}{w} }{ \lam{z}{z} } \\
\longmapsto_v \; &\lam{z}{z}
\end{align*}
In each step, the $\beta_v$ redex is underlined. The
evaluation contexts for the first and third term are the same, $\appp{
  \lamp{w}{w} }{ \hole }$, and it is contained in the evaluation
context for the second term, $\appp{ \lamp{w}{w} }{ \appp{ \lamp{x}{x}
  }{ \hole } }$. Although the evaluation contexts in the first three
terms have repeated parts, a standard reduction machine for the
call-by-value calculus must re-partition the program at each
evaluation step.

The CK machine improves upon the standard reduction machine for the by-value \lc by eliminating redundant search steps. While the standard reduction machine uses whole programs as machine states, a state in the CK machine is divided into separate subterm (C) and evaluation context (K) registers. More precisely, the C in the CK machine represents a control string, i.e., the subterm to be evaluated, and the K is a continuation, which is a data structure
that represents an evaluation context in an ``inside-out'' manner. The
original program can be reconstructed from a CK machine state by
``plugging'' the expression in the C subterm register into the context
represented by K. When the control string is a redex, the CK machine
can perform a reduction, just like the standard reduction
machine. Unlike the standard reduction machine though, the CK machine
still remembers the previous evaluation context in the context
register and can therefore resume the search for the next redex from
the contractum in C and the evaluation context in K.

\subsection{CK+ Machine States}

We introduce the CK+ machine, a variant of the CK machine, for the
by-need \lc. The CK+ machine is also a modification of the abstract
machine of Garcia et al.~\cite{Garcia09lazy}. The machine states for
the CK+ machine are specified in figure~\ref{fig:crksyntax}. The core
CK+ machine has three main registers, a control string ($C$), a
``renaming'' environment ($R$), and a continuation stack ($\Kstack$).

\begin{figure}[htbp]
\figuresepline
\begin{align*}
S,T ::= \;&\crk{ C }{ R }{ \Kstack } 
\tag*{machine states} \\
C ::= \;&M       
\tag*{control strings} \\
R ::= \;&(i, \; \ldots)
\tag*{renaming environments} \\
i \in \;&\mathbb{N}
\tag*{offsets} \\
\Kstack ::= \;&\makestack{k, \; K, \; \ldots}
\tag*{continuation stacks} \\
K ::= \;&\KbindRdb{M}{R}{k}
\tag*{complete stack frames} \\
k ::= \;&\Kmt \mid \KargR{M}{R}{k} \mid \Kopdb{\Kstack}{k} 
\tag*{partial stack frames} 
\end{align*}

\caption{CK+ machine states.}
\label{fig:crksyntax}

\figuresepline

\end{figure}

In figure~\ref{fig:crksyntax}, the $\ldots$ notation means ``zero or
more of the preceeding element'' and in the stack $\makestack{ k, \;
  K, \; \ldots }$, the partial stack frame $k$ is the top of the
stack.  The initial CK+ machine state is $\crk{ M }{ \emptyR }{
  \makestack{ \Kmt } }$, where $M$ is the given program, $\emptyR$ is
an empty renaming environment, and $\makestack{ \Kmt }$ is a stack
with just one element, an empty frame.

\subsection{Renaming Environment}

As mentioned in section~\ref{debruijnrenaming}, substitution requires
some form of renaming, which manifests itself as lexical address
adjustments when using a de Bruijn representation of terms. Instead of
adjusting addresses directly, the CK+ machine delays the adjustment by
keeping track of offsets for all free variables in the control string
in a separate \emph{renaming environment}.  The delayed renaming is
forced when a variable occurrence is evaluated, at which point the
offset is added to the variable before it is used to retrieve its value 
from the control stack.

Here we use lists for renaming environments and the
offset corresponding to variable $n$, denoted $R(n)$, is the $n$-th
element in $R$ (0-based). The : function is cons, and the function $\applyRtoM{R}{M}$ applies a
renaming environment $R$ to a term $M$, yielding a term like $M$
except with appropriately adjusted lexical addresses:
\begin{eqnarray*}
\applyRtoM{\emptyR}{M} & = & M 
\\
\applyRtoM{R}{n} & = & n + R(n) 
\\
\applyRtoM{R}{\lamdbp{M}}
  & = & 
\lamdb{\applyRtoMp{\consp{0}{R}}{M}} 
\\
\applyRtoM{R}{\appp{M}{N}}
  & = & 
\appp{\applyRtoMp{R}{M}}{\applyRtoMp{R}{N}}
\end{eqnarray*}

Because the CK+ machine uses renaming environments, the
$\incFVsfnname$ function from section~\ref{debruijnrenaming} is
replaced with an operation on $R$. When the machine needs to increment
all free variables in a term, it uses the $\addxfnname$ function to
increment all offsets in the renaming environment that accompanies the
term. The notation $\addx{R}{x}$ means that all offsets in renaming
environment $R$ are incremented by $x$. Thus, the use of indices
in place of variables enables hygiene maintenance through simple
incrementing and decrementing of the indices. As a result, we have eliminated the need to keep track of the ``active variables'' that are present in
Garcia et al.'s machine~\cite[Section 4.5]{Garcia09lazy}.

\subsection{Continuations and the Continuation Stack}

Like the CK machine, the CK+ machine represents evaluation contexts as
continuations. The $\hole$ context is represented by the $\Kmt$
continuation. An evaluation context $\inhole{E}{\appp{\hole}{N}}$ is
represented by a continuation $\KargR{M}{R}{k}$ where $k$ represents
$E$ and $\applyRtoMp{R}{M}=N$.  An evaluation context
$\inhole{E}{\app{\lamdbp{\hole}}{N}}$ is represented by a continuation
$\KbindRdb{M}{R}{k}$ where $k$ represents $E$ and
$\applyRtoMp{R}{M}=N$. Finally, the
$\inhole{E}{\app{\lamdbp{\inhole{E'}{n}}}{\hole}}$ context is
represented by an $\Kopdb{\Kstack}{k}$ continuation. The $E'$ under
the $\lambda$ in the evaluation context is represented by the nested
$\Kstack$ stack in the continuation and the $E$ surrounding the
evaluation context corresponds to the $k$ in the continuation. The
\texttt{op} continuation does not need to remember the $n$ variable in
the evaluation context because the variable can be derived from the
length of $\Kstack$.

The contents of the $\Kstack$ register represent the control stack of the program and we refer to an element of this stack as a frame. The key difference between the CK+ machine and Garcia et al.'s machine is in the organization of the frames of the stack. Instead of a flat list of frames like in Garcia et al.'s machine, our control stack frames are groups of nested continuations of a special shape. Thus we also call our control stack a ``continuation stack.'' We use two kinds of frames, partial and complete. The first frame in the continuation stack is always a partial one, while all others are complete. The
outermost continuation of a complete frame is a \texttt{bind} and all
other nested pieces of a complete frame are \texttt{op}, \texttt{arg},
or \texttt{mt}. Thus, not counting the first partial frame, there is
exactly one frame in the control stack for every \texttt{bind}
continuation in the program. As a result, the machine can use a
variable (lexical address) $n$ to find the \texttt{bind} corresponding
to that variable in the control stack.

\subsection{Maintaining the Continuation Stack}
\label{continuationstack}

Each frame of the control stack, with the exception of the top frame,
has the shape $\KbindRdb{M}{R}{k}$, where $k$ is a partial frame that
contains no additional \texttt{bind} frames. In order for the
continuation stack to maintain this invariant, CK+ machine transitions
must adhere to two conditions:

\begin{enumerate}
\item When a machine transition is executed, only the top partial
  frame of the stack is updated unless the instruction descends under
  a $\lambda$.
\item If a machine transition descends under a $\lambda$, the partial frame on 
top of the stack is completed and a new $\Kmt$ partial frame is pushed onto 
the stack.
\end{enumerate}

Essentially, the top frame in the stack ``accumulates context'' until
a $\lambda$ is encountered, at which time the top partial frame
becomes a complete frame. Maintaining evaluation contexts for the
program in this way implies a major consequence for the CK+ machine:

\begin{quote}
when the control string is a variable $n$, then the binding for $n$ is $(n + R(n) + 1)$ stack frames away.
\end{quote}

\subsection{Relating Machine States to Terms}

\begin{figure}[htb]

\figuresepline

\begin{minipage}{5.5cm}
\[
\begin{array}{rcll}
\phifn{ \crk{ M }{ R }{ \Kstack } }
  & = & 
\inhole{\Kstack}{\applyRtoM{R}{M}} 
\end{array}
\]
\end{minipage}
\begin{minipage}{3cm}
\[
\begin{array}{rcl}
\inhole{\makestack{k, \; K, \; \ldots}}{M} 
  & = & 
\inhole{\ldots}{\inhole{K}{\inhole{k}{M}}} 
\\[2mm]
\inhole{\Kmt}{M} & = & M 
\\
\inhole{\KargR{N}{R}{k}}{M} & = & \inhole{k}{(\app{M}{\applyRtoMp{R}{N}})}
\\
\inhole{\Kopdb{\Kstack}{k}}{M}
  & = & 
\inhole{ k }{ \app{ \lamdbp{\inhole{\Kstack}{\len{\Kstack}-1}} }{ M } } 
\\
\inhole{\KbindRdb{N}{R}{k}}{M}
  & = & 
\inhole{k}{\app{ \lamdbp{M} }{ \applyRtoMp{R}{N} }}
\end{array}
\]
\end{minipage}

\caption{$\phifnname$ converts CK+ machine states to \lc terms.}
\label{fig:phi}

\figuresepline

\end{figure}

Figure~\ref{fig:phi} defines the $\phifnname$ function, which converts
machine states to $\lambda$-terms.  It uses the $\applyRtoM{R}{M}$
function to apply the renaming environment to the control string and
then uses a family of ``plug'' functions, dubbed $\inhole{\cdot}{\cdot}$, to plug the renamed control string into the hole of the context represented by the continuation component of the state.  Figure~\ref{fig:phi} also defines these plug functions, where $\inhole{K}{M}$ yields the term obtained by plugging $M$ into the context represented by $K$, and $\inhole{\Kstack}{M}$ yields the term when $M$ is plugged into the context represented by the continuation stack $\Kstack$.

\subsection{CK+ Machine State Transitions}
\label{crkrules}

Figure~\ref{fig:crkmachine} shows the first four state transitions for
the CK+ machine. The $\appendfnname$ notation indicates an ``append''
operation for the continuation stack. Since the purpose of the CK+
machine is to remember intermediate states in the search for a redex,
three of the first four rules are search rules. They shift pieces of
the control string to the $\Kstack$ register. For example, the
[shift-arg] transition shifts the argument of an application to the
$\Kstack$ register.

\begin{figure}[htb]

\figuresepline
\small
\begin{center}
\begin{tabular}{lr}

\multicolumn{2}{c}{$\longmapsto_{ck+}$} \\
\hline & \\


 & \multicolumn{1}{r}{[shift-arg]} \\

$\crk{ \appp{M}{N} }
     { R }
     { \makestack{ k, \; K, \; \ldots } }$ & 
$\crk{ M }
    { R }
    { \makestack{ \KargR{N}{R}{k}, \; K, \; \ldots } }$ \\

 & \\ 
 

 & \multicolumn{1}{r}{[descend-$\lambda$]} \\

$\crk{ \lamdb{M} }
     { R }
     { \makestack{ \KargR{N}{R'}{k}, \; K, \; \ldots } }$ & 
\hspace{5mm} $\crk{ M }
     { \cons{0}{R} }
     { \makestack{ \Kmt, \; \KbindRdb{N}{R'}{k}, \; K, \; \ldots } }$ \\

 & \\


 & \multicolumn{1}{r}{[lookup-arg]} \\

$\crk{ n }
     { R }
     { \append{ \Kstack }
              { \makestack{ \KbindRdb{N}{R'}{k}, \; K, \; \ldots } } }$ &
$\crk{ N }
     { R' }
     { \makestack{ \Kopdb{\Kstack}{k}, \; K, \; \ldots } }$ \\

where 
$\len{\Kstack} = n+R(n)+1$ & \\


 & \multicolumn{1}{r}{[resume]} \\

$\crk{ V }
     { R }
     { \makestack{ \Kopdb{\Kstack}{k}, \; K, \; \ldots } }$ &
$\crk{ V }
     { R' }
     { \append{ \Kstack }
              { \makestack{ \KbindRdb{V}{R}{k}, \; K, \; \ldots } } }$ \\

 & where 
  $R' = \addx{R}{\len{\Kstack}}$ \\

 & \\ \hline 

\end{tabular}
\end{center}

\normalsize

\caption{State transitions for the CK+ machine.}
\label{fig:crkmachine}

\figuresepline

\end{figure}

The [descend-$\lambda$] transition shifts a $\lambda$ binding to the
$\Kstack$ register. When the control string in the CK+ machine is a
$\lambda$ abstraction, and that $\lambda$ is the operator in an
application term---indicated by an \texttt{arg} frame on top of the
stack---the body of the $\lambda$ becomes the control string; the top
frame in the stack is updated to be a complete \texttt{bind} frame;
and a new partial $\Kmt$ frame is pushed onto the stack.

The [descend-$\lambda$] instruction also updates the renaming
environment which, as mentioned, is a list of numbers. 
There is one offset in the renaming environment
for each \texttt{bind} continuation in the control stack and the
offsets in the renaming environment appear in the same order as
their corresponding \texttt{bind} continuations. When the machine descends into a $\lambda$ expression,
a new \texttt{bind} continuation is added to the top of the control
stack so a new corresponding offset is also added to the front of the
renaming environment. Since offsets are only added to the renaming environment
when the machine goes under a $\lambda$, whenever a variable $n$ (a
lexical address) becomes the control string, its renaming offset is located at
the $n$-th position in the renaming environment. 
A renaming offset keeps track of the relative position of a \texttt{bind}
continuation since it was added to the control stack so a [descend-$\lambda$] 
instruction adds a 0 offset to the renaming environment. 

When the control string is a variable $n$, the binding for $n$ is
accessed from the continuation stack by accessing the $(n + R(n) +
1)$-th frame in the stack. The [lookup-arg] instruction moves the
argument that is bound to the variable into the control string
register. The $\texttt{op}$ frame on top of the stack is updated to
store all the frames inside the binding $\lambda$, in the same order
that they appear in the stack. Using this strategy, the machine can
``jump'' back to this context after it is done evaluating the
argument. For a term $\app{ \lamdbp{\inhole{E}{n}} }{ M }$, this is
equivalent to evaluating $M$ while saving $E$ and then returning to
the location of $n$ after the argument $M$ has been evaluated. Note
that the [lookup-arg] transition does not perform substitution. The
argument has been copied into the control string register, but it has
also been removed from the continuation stack register.

When the frame on top of the stack is an $\texttt{op}$, it means the
current control string is an argument in an application term. When
that argument is a value, then a redex has been found and the value
should be substituted for the variable that represents it. The
[resume] rule is the only rule in figure~\ref{fig:crkmachine} that
performs a reduction in the sense of the by-need calculus. It is the
implementation of the \emph{deref} notion of reduction from the
calculus. Specifically, the [resume] rule realizes this substitution
by restoring the frames in the \texttt{op} frame back into the
continuation stack as well as copying the value into a new
\texttt{bind} frame. The result is nearly equivalent to the left hand
side of the [lookup-arg] rule except that the argument has been
evaluated and has been substituted for the variable.

Since the [resume] rule performs substitution, it must also update the
renaming environment. Hence, the distance between $V$ and its binding
frame is added to every offset in the renaming environment $R$, as
indicated by $\addx{R}{\len{\Kstack}}$. In other words, each offset in
the environment is being incremented by the number of \texttt{bind}
continuations that are added to the control stack.

In summary, the four rules of figure~\ref{fig:crkmachine} represent
intermediate partitions of the program into a subterm and an
evaluation context before a partitioning of the program into an
evaluation context and a \emph{deref} redex is found. As a result,
the CK+ machine does not need to repartition the entire program on
every machine step and is therefore more efficient than standard
reduction. To complete the machine now, we must make it deal with
answers.

\subsection{Dealing with Answers}
\label{dealingwithanswers}

The CK+ machine described so far has no mechanism to identify whether
a control string represents an answer. The by-need calculus, however,
assumes that it is possible to distinguish answers from terms on
several occasions, one of which is the completion of evaluation. To
efficiently identify answers, the CK+ machine uses a fourth ``answer''
register. The CK+ machine identifies answers by searching the
continuation stack for frames that are answer contexts. To distinguish
answer contexts from evaluation contexts, we characterize answer
contexts in~figure~\ref{fig:crkanswersyntax}.  A final machine state
has the form $\crksearch{ V }{ R }{ \emptystack }{ \Astack }$.

\begin{figure}[htbp]

\figuresepline
\begin{align*}
S,T ::= \;&\crk{C}{R}{\Kstack} \;\mid\; \crksearch{ V }{ R }{ \makestack{F, \; \ldots, \; K, \; \ldots} }{\Astack}
\tag*{machine states} \\
F ::= \;&\KbindRdb{M}{R}{\Kmt}
\tag*{answer (complete) frame} \\
\Astack ::= \;&\makestack{ \Kmt, \; F, \; \ldots }
\tag*{answer stacks} 
\end{align*}

\caption{CK+ machine answer states.}
\label{fig:crkanswersyntax}

\figuresepline

\end{figure}

When the control string is a value $V$ and $\Kmt$ is the topmost stack
frame, then some subterm in the program is an answer. In this
situation, the $\Kmt$ frame in the stack is followed by an arbitrary
number of $F$ frames. The machine searches for the answer by shifting
$\Kmt$ and $F$ frames from the continuation stack register to the
answer register. The machine continues searching until either a $K$
frame is seen or the end of the continuation stack is reached. If the
end of the continuation stack is reached, the entire term is an answer
and evaluation is complete.

\begin{figure}[htbp]

\figuresepline
\small
\begin{center}
\begin{tabular}{lr}

\multicolumn{2}{c}{$\longmapsto_{ck+}$} \\
\hline & \\


 & \multicolumn{1}{r}{[ans-search1]} \\

$\crk{ V }
     { R }
     { \makestack{ \Kmt, \; K, \; \ldots } }$ &
$\crksearch{ V }
           { R }
           { \makestack{ K, \; \ldots } }
           { \makestack{ \Kmt } }$ \\

 & \\


 & \multicolumn{1}{r}{[ans-search2]} \\

$\crksearch{ V }
           { R }
           { \makestack{ F', \; K, \; \ldots } }
           { \makestack{ \Kmt, \; F, \; \ldots } }$ &
$\crksearch{ V }
           { R }
           { \makestack{ K, \; \ldots } }
           { \makestack{ \Kmt, \; F, \; \ldots, \; F' } }$ \\

 & \\


%
%
%

 & \multicolumn{1}{r}{[assoc-L]} \\

\multicolumn{2}{l}{
  $\crksearch{ \lamdb{M'} }
             { R }
             { \makestack{ \KbindRdb{M}{R'}{\KargR{N}{R''}{k}}, \; K, \; \ldots } }
             { \makestack{ \Kmt, \; F, \; \ldots } }$ \hspace{3cm} } \\
& \\             
\multicolumn{2}{r}{             
$\crk{ M' }
     { \cons{0}{R} }
     { \makestack{ \Kmt, \; \KbindRdb{N}{R'''}{\Kmt}, \; F, \; \ldots, \; \KbindRdb{M}{R'}{k}, \; K, \; \ldots } }$ } \\

 & where $R''' = \addx{R''}{\len{ \makestack{ F, \; \ldots } } + 1}$ \\

 & \\ 
 

%
%

 & \multicolumn{1}{r}{[assoc-R]} \\
 
\multicolumn{2}{l}{
  $\crksearch{ V }
             { R }
             { \makestack{ \KbindRdb{M}{R'}{\Kopdb{\Kstack}{k}}, \;K, \ldots } }
             { \makestack{ \Kmt, \; F, \; \ldots } }$ } \\
& \\             
\multicolumn{2}{r}{             
$\crk{ V }
     { R'' }
     { \append{ \Kstack' }
              { \makestack{ \KbindRdb{V}{R}{\Kmt}, \; F, \ldots\, \; \KbindRdb{M}{R'}{k}, \; K, \; \ldots } } }$ } \\

\multicolumn{2}{r}{             
 where 
 $\Kstack' = \addx{\Kstack}{\len{ \makestack{ F, \; \ldots } } + 1}$, and $R'' = \addx{R}{\len{\Kstack'}}$ } \\
 
 & \\ \hline 

\end{tabular}
\end{center}

\normalsize

\caption{Transitions of the CK+ machine that handle answer terms.}
\label{fig:crkanswerrules}

\figuresepline

\end{figure}

The presence of a $K$ frame means an \emph{assoc-L} or an
\emph{assoc-R} redex has been found. In order to implement these
shifts, the CK+ machine requires four additional rules for handling
answers, as shown in figure~\ref{fig:crkanswerrules}. The
[ans-search1] rule shifts the $\Kmt$ frame to the answer register. The
[ans-search2] rule shifts $F$ frames to the answer register. The
[assoc-L] rule and the [assoc-R] rule roughly correspond to the
\emph{assoc-L} and \emph{assoc-R} notions of reduction in the
calculus, respectively. The rules are optimized versions of corresponding notions of reduction in the calculus because the transition after the reduction is always known. The [assoc-L] machine rule performs the equivalent of an \emph{assoc-L} reduction in the calculus, followed by a [descend-$\lambda$] machine transition. The [assoc-R] machine rule performs the equivalent of an \emph{assoc-R} reduction in the calculus, followed by a [resume] machine transition.

In figure~\ref{fig:crkanswerrules}, the function $\addxfnname$ has
been extended to a family of functions defined over renaming
environments, continuation stacks, and stack frames: $\addx{R}{x}$
increments every offset in the renaming environment $R$ by $x$ and the
function $\addx{\Kstack}{x}$ increments every offset in every renaming
environment in every frame in $\Kstack$ by $x$. The function $\len{
  \makestack{ F, \; \ldots } }$ returns the number of frames in
$\makestack{ F, \; \ldots }$. Maintaining the offsets in this manner
is equivalent to obeying Garcia et al.'s ``well-formedness'' condition
on machine states.

\subsection{Correctness}

Correctness means that the standard reduction machine and the CK+
machine define the same evaluator functions. Let us start with an
appropriate definition for the CK+ machine:
$$
\texttt{eval}_{ck+}(M) = 
\begin{cases} 
  a,  & \mbox{if }\crk{M}{\emptyR}{\makestack{\Kmt}} \multistepck \crksearch{V}{R}{\emptystack}{\Astack}\text, \\
       & \mbox{where } a = \phifn{ \crksearch{V}{R}{\emptystack}{\Astack} } \\
  \bot, & \mbox{if for all }\crk{M}{\emptyR}{\makestack{\Kmt}} \multistepck S, 
   S \onestepck T
\end{cases}
$$
Recall that the function $\phifnname$ converts CK+ machine states to
\lc terms (figure~\ref{fig:phi}). Here, $\phifnname$ has been extended to handle ``answer'' machine states:

$$\phifn{ \crksearch{ M }{ R }{ \Kstack }{ \Astack } }
   = 
\inhole{ \Kstack }{ \inhole{ \Astack }{ \applyRtoM{R}{M} } }$$

The desired theorem says that the two \texttt{eval} functions are equal.
\begin{theorem}\label{thmcorrect}
$\texttt{eval}_{\textbf{need}} = \texttt{eval}_{ck+}$.
\end{theorem}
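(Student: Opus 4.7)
The plan is to establish correctness via a simulation argument based on the $\phifnname$ function from Figure~\ref{fig:phi}. Concretely, I want to prove three things: (i) $\phifn{\crk{M}{\emptyR}{\makestack{\Kmt}}} = M$, so the initial machine state corresponds to the source program; (ii) any final CK+ state $\crksearch{V}{R}{\emptystack}{\Astack}$ satisfies $\phifn{\crksearch{V}{R}{\emptystack}{\Astack}} = \inhole{A}{V'}$ for an answer context $A$ and value $V'$, hence it maps to a by-need answer; and (iii) for every CK+ transition $S \onestepck T$, either $\phifn{S} = \phifn{T}$ (search steps) or $\phifn{S} \onestepneed \phifn{T}$ (reduction steps). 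Combined with the determinism of both machines (via unique decomposition), this simulation immediately yields agreement of the two \texttt{eval} functions on converging inputs, and a contrapositive argument handles divergence.

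For step (iii), I will sort the eight transitions into two groups. The search transitions [shift-arg], [descend-$\lambda$], [lookup-arg], [ans-search1], and [ans-search2] should all be $\phifnname$-preserving: [shift-arg] and the answer-search rules simply refold subterms into continuations; [descend-$\lambda$] relies on the fact that consing a $0$ offset onto $R$ exactly compensates for completing the top partial frame into a $\Kbindfnname$ and pushing a fresh $\Kmt$; and for [lookup-arg] the side condition $\len{\Kstack} = n + R(n) + 1$ guarantees that the $\Kbindfnname$ frame being opened is precisely the binder for $n$, so that the $\len{\Kstack} - 1$ appearing in the plug rule for $\Kopfnname$ reconstitutes the original index. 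The reduction transitions [resume], [assoc-L], and [assoc-R] should correspond to the \emph{deref}, \emph{assoc-L}, and \emph{assoc-R} notions of reduction respectively, with [assoc-L] and [assoc-R] folding in the immediately following search step (a [descend-$\lambda$] and a [resume], respectively).

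The main obstacle I expect is the de Bruijn bookkeeping inside step (iii), specifically verifying the arithmetic of the $\addxfnname$ operation. I will need a substitution lemma of the form $\applyRtoM{\addx{R}{x}}{M} = \incFVs{\applyRtoM{R}{M}}{x}{0}$, together with analogous commutation lemmas between $\addx{\Kstack}{x}$ and the stack-plug operation. These reduce to showing that in any reachable state the offsets in the renaming environment are aligned one-for-one with the $\Kbindfnname$ frames in the continuation stack, and that for any $\Kbindfnname$-containing substack $\Kstack$ the count $\Delta(\inhole{\Kstack}{\hole})$ equals the number of $\Kbindfnname$ frames in $\Kstack$. Both are structural invariants that I will prove as preliminary lemmas, by induction on the number of CK+ steps, using the two maintenance conditions stated in Section~\ref{continuationstack}. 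Once these lemmas are available, the equalities $\phifn{S} \onestepneed \phifn{T}$ for [resume], [assoc-L], and [assoc-R] reduce to matching two algebraic forms of the same index shift.

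Finally, to conclude $\texttt{eval}_{\textbf{need}}(M) = \texttt{eval}_{ck+}(M)$ for all $M$, I combine (i)--(iii) with Lemma~\ref{totalfunlem1}. If $M$ converges under $\texttt{eval}_{\textbf{need}}$ to $a$, then grouping CK+ steps into maximal search-then-reduction blocks projects the machine execution onto a standard reduction sequence ending at $a$; the eventual answer-search phase is guaranteed to terminate because the continuation stack is finite. Conversely, if $M$ converges under $\texttt{eval}_{ck+}$, projection yields a by-need reduction sequence ending in the same answer. Divergence is handled by the contrapositive: an infinite CK+ trace must contain infinitely many reduction steps (since search steps strictly shrink the control string or stack depth between reductions), and these project to an infinite standard reduction sequence, and symmetrically in the other direction.
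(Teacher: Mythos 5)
Your proposal follows essentially the same route as the paper: your step (iii) is precisely the paper's central Lemma~\ref{mainlem1} (every CK+ transition either preserves $\phifnname$ or projects to a standard reduction step), your de Bruijn commutation lemma is the paper's Lemma~\ref{auxlem1}, and the final assembly from the simulation plus totality of both evaluators (Lemmas~\ref{totalfunlem1} and~\ref{totalfunlem2}) matches the paper's proof of Theorem~\ref{thmcorrect}. You are in fact slightly more explicit than the paper about two points it leaves implicit---that final states project to answers, and that an infinite machine trace must contain infinitely many reduction steps---though your suggested termination measure for search phases (``control string or stack depth'') would need to be refined into a combined well-founded measure, since [lookup-arg] grows the control string while shrinking the stack and [descend-$\lambda$] does the opposite.
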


To prove the theorem, we first establish some auxiliary lemmas on the
totality of $\texttt{eval}_{\textbf{ck+}}$ and the relation between
CK+ transitions and standard reduction transitions.

\begin{lemma}
\label{totalfunlem2}
$\texttt{eval}_{ck+}$ is a total function.

\end{lemma}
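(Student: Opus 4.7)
The plan is to decompose the totality claim into two subclaims: (i) the transition relation $\onestepck$ is deterministic, so that the value of $\texttt{eval}_{ck+}$ is uniquely determined when it falls into the first clause of the definition; and (ii) the only CK+ states admitting no $\onestepck$-successor are the final answer states $\crksearch{V}{R}{\emptystack}{\Astack}$. Together (i) and (ii) make the two clauses of the definition both mutually exclusive and jointly exhaustive: a run from the initial state either reaches a unique final answer state, or produces an infinite sequence in which every state has a successor by (ii). Determinism is routine syntactic case work, since the left-hand sides of the rules in Figures~\ref{fig:crkmachine} and~\ref{fig:crkanswerrules} are pairwise disjoint once one cases on whether the state is a $\crk{\cdot}{\cdot}{\cdot}$ or a $\crksearch{\cdot}{\cdot}{\cdot}{\cdot}$, on the shape of the control string, and on the top partial frame of $\Kstack$.

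The real content is (ii), proved by case analysis on a non-final state $S$. For $S = \crk{C}{R}{\Kstack}$, split on $C$: an application $\appp{M}{N}$ fires [shift-arg]; an abstraction $\lamdb{M}$ fires [descend-$\lambda$], [ans-search1], or [resume] according to whether the top partial frame of $\Kstack$ is an \texttt{arg}, $\Kmt$, or \texttt{op}; and a variable $n$ fires [lookup-arg] provided the $(n{+}R(n){+}1)$-th frame of $\Kstack$ is a \texttt{bind}. For $S = \crksearch{V}{R}{\Kstack}{\Astack}$ with $\Kstack$ nonempty, the top frame is a complete \texttt{bind} $\KbindRdb{M}{R'}{k}$ whose inner partial frame $k$ is either $\Kmt$ (fires [ans-search2]), an \texttt{arg} (fires [assoc-L]), or an \texttt{op} (fires [assoc-R]). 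These enumerations exhaust the grammars in Figures~\ref{fig:crksyntax} and~\ref{fig:crkanswersyntax}, so progress holds everywhere except at final states.

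The main obstacle is justifying the side condition in [lookup-arg], which demands that the addressed frame exist and actually be a \texttt{bind}. I would discharge this by establishing a reachability invariant proved by induction on the length of the CK+ run: if $S$ is reachable from an initial state $\crk{M}{\emptyR}{\makestack{\Kmt}}$ with $M$ closed, then $\phifn{S}$ is a closed $\lambda$-term, every frame of $\Kstack$ below the top is a complete \texttt{bind} (the well-formedness condition informally maintained in section~\ref{continuationstack}), and each free variable $n$ of the current control string points, via the offset $R(n)$, at the $(n{+}R(n){+}1)$-th frame, which is consequently a \texttt{bind}. The base case is immediate from closedness of the initial term, and each inductive step is a direct check against the four transitions of Figure~\ref{fig:crkmachine} and the four of Figure~\ref{fig:crkanswerrules}, using the definition of $\phifnname$ to translate invariants on CK+ states to invariants on their $\lambda$-term images. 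Combining this invariant with the case analysis above yields (ii), which together with (i) delivers totality.
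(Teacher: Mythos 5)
Your proposal is correct and matches the paper's approach: the paper's entire proof is the one-line remark that the lemma follows ``via a subject reduction argument,'' and your decomposition into determinism, progress by exhaustive case analysis on the grammars of figures~\ref{fig:crksyntax} and~\ref{fig:crkanswersyntax}, and an inductively preserved well-formedness invariant justifying the [lookup-arg] side condition is precisely the standard unpacking of that remark. Your writeup is in fact more detailed than the paper's.
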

\begin{proof}
The lemma is proved via a subject reduction argument.
\qed
\end{proof}
%
%
%
%
The central lemma uses $\phifnname$ to relate CK+ machine transitions
to reductions.

\begin{lemma}
\label{mainlem1}

For all CK+ machine states $S$ and $T$, if $S \onestepck T$, then
either $\phifn{ S } \onestepneed \phifn{ T }$ or $\phifn{ S } =
\phifn{ T }$.

\end{lemma}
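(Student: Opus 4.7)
The plan is to prove the lemma by case analysis on the single-step transition $S \onestepck T$, dispatching on which of the eight rules in Figures~\ref{fig:crkmachine} and~\ref{fig:crkanswerrules} is applied. The cases split naturally into two groups depending on whether the transition merely rearranges the state or implements an actual by-need reduction.

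The first group consists of the search and rearrangement rules: [shift-arg], [descend-$\lambda$], [lookup-arg], [ans-search1], and [ans-search2]. For each of these I would establish $\phifn{S} = \phifn{T}$ by unfolding the definitions of $\phifn{\cdot}$, of $\applyRtoM{R}{\cdot}$, and of the family of plug functions. Each rule simply shuffles a sub-context between the $C$, $R$, $\Kstack$, and $\Astack$ registers while leaving the overall plugged term invariant. In [lookup-arg] the side condition $\len{\Kstack} = n + R(n) + 1$ is crucial: plugging the index $n + R(n)$ into a stack of exactly that length yields precisely an evaluation context of the form $\lamdbp{\inhole{E}{\len{\Kstack}-1}}$, which is what the \texttt{op} continuation on the right-hand side records, so the two decompositions of the surrounding term coincide. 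The two [ans-search] cases are likewise routine because $\phifnname$ on an answer state is defined to flatten the $\Astack$ register back into the same plug position that the frames occupied before.

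The second group consists of the honest reduction rules: [resume], [assoc-L], and [assoc-R]. Here I would show $\phifn{S} \onestepneed \phifn{T}$ by exhibiting a redex in $\phifn{S}$ and verifying that $\phifn{T}$ is its contractum. [resume] realizes the \emph{deref} notion of reduction, whereas [assoc-L] and [assoc-R] realize \emph{assoc-L} and \emph{assoc-R}, each bundled with a following administrative step (a [descend-$\lambda$] or [resume]). Because the bundled administrative steps belong to the first group and therefore do not move the image under $\phifnname$, collapsing them into a single machine transition still matches exactly one by-need reduction at the term level.

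The main obstacle will be reconciling the machine's offset-based renaming with the term-level $\incFVsfnname$ adjustments demanded by the calculus. Specifically, whenever a rule applies $\addx{\cdot}{x}$ to a renaming environment, or propagates it across the environments stored in a nested stack or frame, I would need a supporting invariant asserting that incrementing every offset in $R$ by $x$ and then forcing the renaming coincides with first forcing the renaming and then applying $\incFVs{\cdot}{x}{0}$. Proved by induction on the structure of terms (with the $\lambda$ case using the $\cons{0}{R}$ shift), this auxiliary lemma accounts for the $\incFVs{\cdot}{\addone{\Delta(\cdot)}}{0}$ factors that appear on the calculus side of \emph{deref}, \emph{assoc-L}, and \emph{assoc-R}, and thereby closes the reduction cases.
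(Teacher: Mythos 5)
Your proposal is correct and follows essentially the same route as the paper: a case analysis that treats [resume], [assoc-L], and [assoc-R] as the genuine \emph{deref}/\emph{assoc} reductions and verifies $\phifn{S}=\phifn{T}$ by direct calculation for the remaining search rules. The supporting invariant you identify---that incrementing offsets in $R$ and then forcing the renaming agrees with forcing the renaming and then applying $\incFVsfnname$---is exactly the paper's Lemma~\ref{auxlem1} (which, as you anticipate with the $\cons{0}{R}$ remark, must be stated with a general cutoff $m=\len{R_1}$ for the induction to pass through the $\lambda$ case).
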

\begin{proof}
We proceed by case analysis on each machine transition, starting with
[resume]. Assume
\begin{align*}
&\crk{ V }
     { R }
     { \makestack{ \Kopdb{\Kstack}{k}, \; K, \; \ldots } }
   \onestepck \\
&\crk{ V }
     { \addx{R}{\len{ \Kstack }} }
     { \append{ \Kstack }
              { \makestack{ \KbindRdb{V}{R}{k}, \; K, \; \ldots } } } \enspace,
\end{align*}
then let
\begin{align*}
M_1 &= 
 \phifn{ \crk{ V }
             { R }
             { \makestack{ \Kopdb{\Kstack}{k}, \; K, \; \ldots } } } \\
 &= \inhole{ \makestack{K, \; \ldots} }
           { \inhole{ k }
                    { \app{ \lamdbp{\inhole{\Kstack}{\len{\Kstack}-1}} }
                          { \applyRtoMp{R}{V} } } } \\
M_2 &=
 \phifn{ \crk{ V }
             { \addx{R}{\len{ \Kstack }} }
             { \append{\Kstack}{ \makestack{ \KbindRdb{V}{R}{k}, \; K, \; \ldots }} } } \\
 &= \inhole{ \makestack{K, \; \ldots} }
           { \inhole{ k }
                    { \app{ \lamdbp{\inhole{\Kstack}{\applyRtoM{\addxp{R}{\len{\Kstack}}}{V}}} }
                          { \applyRtoMp{R}{V} } } } \enspace. 
\end{align*}
Since $M_1$ is a standard \emph{deref} redex, we have:
\begin{align*}
&\inhole{ \makestack{K, \; \ldots} }
       { \inhole{ k }
                { \app{ \lamdbp{\inhole{\Kstack}{\len{\Kstack}-1}} }
                      { \applyRtoMp{R}{V} } } } 
  \onestepneed \\
&\inhole{ \makestack{K, \; \ldots} }
       { \inhole{ k }
                { \app{ \lamdbp{\inhole{\Kstack}{ \incFVs{\applyRtoMp{R}{V}}{\len{\Kstack}}{0} }} }
                      { \applyRtoMp{R}{V} } } }
\end{align*}

\noindent To conclude that $M_1 \onestepneed M_2$ by the
\emph{deref} notion of reduction, we need to show:
$$\incFVs{\applyRtoMp{R}{V}}{\len{\Kstack}}{0} =
\applyRtoM{\addxp{R}{\len{\Kstack}}}{V}$$
Lemma~\ref{auxlem1} proves the general case for this
requirement. Therefore, we can conclude that $M_1 \onestepneed
M_2$. The proofs for [assoc-L] and [assoc-R] are similar.

As for the remaining instructions, they only shift subterms/contexts
back and forth between registers, so the proof is a straightforward
calculation.
\qed
\end{proof}

\begin{lemma}
\label{auxlem1}
%
$\forall R, R_1, R_2$, where $R = \append{R_1}{R_2}$ and $m = \len{R_1}$:

$$\incFVs{\applyRtoMp{R}{M}}{x}{m} =
  \applyRtoM{\appendp{R_1}{\addxp{R_2}{x}}}{M}$$

\end{lemma}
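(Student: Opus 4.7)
My plan is to prove this lemma by structural induction on $M$, which splits into the three cases matching the definition of $\applyRtoMp{\cdot}{\cdot}$ (variable, abstraction, application). The induction hypothesis will need to be strong enough to handle the abstraction case, where the splitting of the environment into $R_1$ and $R_2$ shifts, so I would state the claim uniformly over arbitrary choices of $R_1, R_2$ satisfying $R = R_1 \append R_2$ and $m = \len{R_1}$, rather than fixing these up front.

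For the variable case $M = n$, I would split on whether $n < m$ or $n \geq m$. When $n \geq m$, indexing $R$ lands in the $R_2$ portion, so $R(n) = R_2(n - m)$ and $(R_1 \append \addx{R_2}{x})(n) = R_2(n-m) + x$; since the adjusted index $n + R(n) \geq n \geq m$, $\incFVsfnname$ fires and both sides reduce to $n + R_2(n-m) + x$. When $n < m$, indexing lands in $R_1$ and both sides should reduce to $n + R_1(n)$; here the key point is that $\incFVs{n + R_1(n)}{x}{m}$ must leave the index alone, which relies on the invariant $n + R_1(n) < m$ inherited from the well-formedness of renaming environments arising in the machine. For the application case $M = \appp{M_1}{M_2}$, both $\applyRtoMp{\cdot}{\cdot}$ and $\incFVsfnname$ commute with application, so the induction hypothesis applied to each subterm finishes immediately.

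The abstraction case $M = \lamdbp{M'}$ is where the quantification over $R_1, R_2$ pays off. Here $\applyRtoMp{R}{\lamdbp{M'}} = \lamdbp{\applyRtoMp{\consp{0}{R}}{M'}}$, and $\incFVsp{\lamdbp{\cdot}}{x}{m}$ descends under the binder and bumps the cutoff to $m+1$. The trick is to rewrite $\consp{0}{R} = \consp{0}{(\append{R_1}{R_2})} = \append{(\consp{0}{R_1})}{R_2}$ and to observe $\len{\consp{0}{R_1}} = m+1$, so the induction hypothesis applies with $R_1' = \consp{0}{R_1}$ and the same $R_2$. A short calculation, using that $\addxfnname$ is unaffected by consing a fresh offset on the other side, reassembles the right-hand side.

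The main obstacle is the variable case for $n < m$: it is routine arithmetic only if one already knows the well-formedness invariant that every offset stored in the environment keeps variables pointing at binders no farther than $|R|$. I would either assume this as a running invariant on machine-generated renaming environments (as is consistent with the rest of the paper) or strengthen the statement of the lemma to carry it along, since without it the equality in the $n < m$ subcase does not hold in full generality.
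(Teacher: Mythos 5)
Your proof is correct and follows the same route as the paper, which simply asserts that the lemma holds ``by structural induction on $M$'' and gives no further detail. Your concern about the $n<m$ variable subcase is a genuine refinement rather than a quibble: with $R_1$ universally quantified as in the statement, the lemma is literally false --- take $M=0$, $R_1=(5)$, $R_2=\emptyR$, so $m=1$; then the left side is $\incFVs{\applyRtoMp{(5)}{0}}{x}{1}=\incFVs{5}{x}{1}=5+x$ while the right side is $\applyRtoM{(5)}{0}=5$. The statement therefore needs to carry the invariant that $n+R_1(n)<m$ for all $n<m$ (which, as you observe, is preserved when a $0$ is consed onto $R_1$ in the abstraction case), or else be read as quantified only over the instances arising in the proof of Lemma~\ref{mainlem1}, where $R_1$ starts out as $\emptyR$ and acquires only zero entries as the induction descends under binders, so the invariant holds trivially.
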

\begin{proof}
By structural induction on $M$.
\qed
\end{proof}

Using lemma~\ref{mainlem1}, the argument to prove our main theorem is
straightforward.

\begin{proof}[of Theorem~\ref{thmcorrect}]
We show $\texttt{eval}_{ck+}(M) = a \iff \texttt{eval}_{\textbf{need}}(M) = a$.

The left-to-right direction follows from the observation that for all
CK+ machine starting states $S$ and final machine states $S_\final$,
if $S \multistepck S_\final$, then \mbox{$M \multistepneed$} $a$,
where $\phifn{ S_\final } = a$. This is proved using
lemma~\ref{mainlem1} and induction on the length of the $\multistepck$
sequence.

The other direction is proved by contradiction. Assume
$\texttt{eval}_{\textbf{need}}(M) = a \neq \bot$ and $\texttt{eval}_{ck+}(M)
\neq a$. Since $\texttt{eval}_{ck+}$ is a total function, either:
\begin{enumerate}
\item $\crk{M}{\emptyR}{\makestack{\Kmt}} \multistepck S_\final$,
  where $\phifn{S_\final} \neq a$, or

\item the reduction of $\crk{M}{\emptyR}{\makestack{\Kmt}}$ diverges.
\end{enumerate}

It follows from the left-to-right direction of the theorem that, in
the first case, $\texttt{eval}_{\textbf{need}}(M) = \phifn{S_\final}
\neq a$, and in the second case, $\texttt{eval}_{\textbf{need}}(M) =
\bot$.
However, $\texttt{eval}_{\textbf{need}}(M) = a$ was assumed and
$\texttt{eval}_{\textbf{need}}$ is a total function, so a
contradiction has been reached in both cases.
Since none of the cases are possible, we conclude that if
$\texttt{eval}_{\textbf{need}}(M) = a$, then $\texttt{eval}_{ck+}(M) =
a$.
\qed
\end{proof}

\section{Stack Compacting}

Because the by-need $\lambda$-calculus does not substitute the argument of a function call for all occurrences of the parameter at once, applications are never removed. In the CK+ machine, arguments accumulate on the stack and remain there forever. For a finite machine, an ever-growing stack is a problem. In this section, we explain how to compact the stack. 

To implement a stack compaction algorithm in the CK+ machine, we
introduce a separate SC machine which removes all unused stack bindings from a CK+ machine state. Based on the SC
machine, the CK+ machine can be equipped with a non-deterministic [sc] transition:
\begin{align*}
\crk{ M }
    { R }
    { \Kstack }
  &\onestepcrk
\crk{ M }
    { R' }
    { \Kstack' }
\tag*{[sc]} \\
\textrm{where} 
\gc{ \FV{M}{R}{0} }
	 { (M,R) }
   { \Kstack }
   { \emptystack } 
  &\multistep_{sc}
\gc{ \mathcal{F} }
	 { (M,R') }
   { \emptystack }
   { \Kstack' }
\end{align*}

Figure~\ref{fig:scmachine} presents the SC machine. In this figure,
\FVfnname refers to a family of functions that extracts the set of
free variables from terms, stack frames, and continuation stacks. The
function \FVfnname takes a term $M$, a renaming environment $R$ and a
variable $m$, and extracts free variables from $M$, where a free
variable is defined to be all $n$ such that $n + R(n) \geq m$. The
function \FVfnname is similarly defined for stack frames and
continuation stacks. In addition, $\dec{\mathcal{F}}$ denotes the set obtained by decrementing every element in $\mathcal{F}$ by one. Finally, $\Kstack@k$ represents a frame merged appropriately into a continuation stack. For example, $\makestack{ k', \; K, \; \ldots, \KbindRdb{M}{R}{k''} }@k = \makestack{ k', \; K, \; \ldots, \;
  \KbindRdb{M}{R}{k''}@k }$, where $\KbindRdb{M}{R}{k''}@k =
\KbindRdb{M}{R}{k''@k}$, and so on, until finally $\Kmt@k = k$.

\begin{figure}[htb]
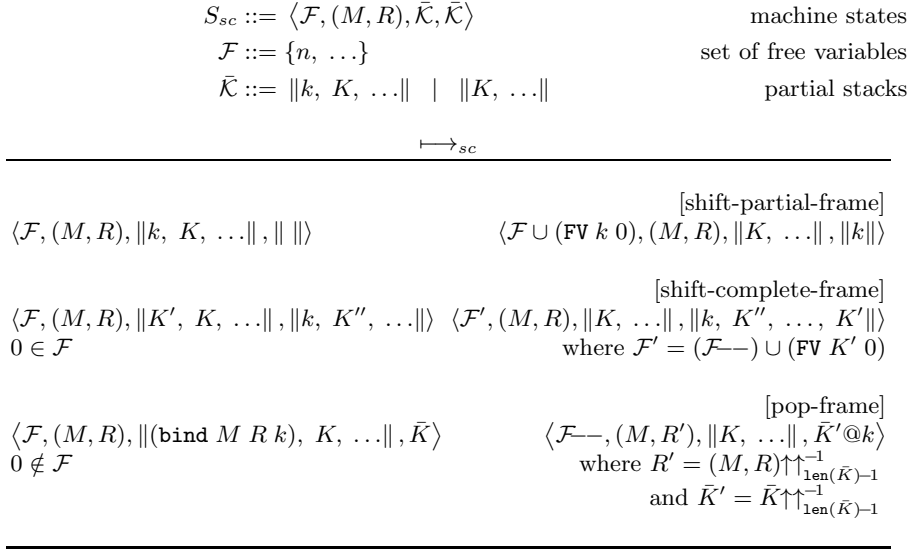


\figuresepline
\begin{align*}
S_{sc} ::= \;&\gc{ \mathcal{F} }{(M,R)}{ \Kcalstack }{ \Kcalstack } 
\tag*{machine states} \\
\mathcal{F} ::= \;&\{n, \; \ldots\}
\tag*{set of free variables} \\
\Kcalstack ::= \;&\makestack{k, \; K, \; \ldots} \;\;\mid\;\; \makestack{K, \; \ldots}
\tag*{partial stacks} 
\end{align*}
\begin{center}
\begin{tabular}{lr}

\multicolumn{2}{c}{$\longmapsto_{sc}$} \\

\hline & \\


 & \multicolumn{1}{r}{[shift-partial-frame]} \\

$\gc{ \mathcal{F} }
		{ (M,R) }
    { \makestack{k, \; K, \; \ldots} }
    { \emptystack }$ &
$\gc{ \mathcal{F} \cup \FVk{k}{0} }
		{ (M,R) }
    { \makestack{K, \; \ldots} }
    { \makestack{k} }$ \\

 & \\ 
 

 & \multicolumn{1}{r}{[shift-complete-frame]} \\

$\gc{ \mathcal{F} }
		{ (M,R) }
    { \makestack{K', \; K, \; \ldots} }
    { \makestack{k, \; K'', \; \ldots} }$ &
$\gc{ \mathcal{F}' }
		{ (M,R) }
    { \makestack{K, \; \ldots} }
    { \makestack{k, \; K'', \; \ldots, \; K'} }$ \\
$0 \in \mathcal{F}$ & where $\mathcal{F}' = \decp{\mathcal{F}} \cup \FVk{K'}{0}$ \\

 & \\


 & \multicolumn{1}{r}{[pop-frame]} \\

$\gc{ \mathcal{F} }
		{ (M,R) }
    { \makestack{\KbindRdb{M}{R}{k}, \; K, \; \ldots} }
    { \Kstack }$ &
$\gc{ \dec{\mathcal{F}} }
		{ (M,R') }
    { \makestack{K, \; \ldots} }
    { \Kstack'@k }$ \\
$0 \notin \mathcal{F}$ & 
 where $R' = \adjOffsetR{ M }{ R }{ -\!1 }{ \subone{\len{\Kstack}}\; }$\\
& and
 $\Kstack' = \adjOffsetK{ \Kstack }{ -\!1 }{ \subone{\len{\Kstack}}\; }$ \\

 & \\ \hline 
 
\end{tabular}
\end{center}

\caption{The SC machine.}
\label{fig:scmachine}

\figuresepline

\end{figure}

Also in figure~\ref{fig:scmachine}, $\adjOffsetfnname$ denotes a
family of functions that adjusts the offsets in renaming environments
to account for the fact that a $\lambda$ has been removed from the
term. If a variable $n$ refers to a \texttt{bind} stack frame that is
deeper in the stack than the frame that is removed, then the offset
for that variable needs to be decremented by one. A variable $n$
refers to a \texttt{bind} that is deeper than the removed frame if
$n+R(n)$ is greater than the depth of the removed frame. The
$\adjOffsetfnname$ function can be applied to renaming environments
directly or to continuation stacks or stack frames that contain
renaming environments. We use the notation
$\adjOffsetR{M}{R}{x}{\ell}$ to mean that the offsets in $R$ are
incremented by $x$ for all variables $n$ in $M$ where $n + R(n) >
\ell$. The result of $\adjOffsetR{M}{R}{x}{\ell}$ is a new renaming
environment with the adjusted offsets. The notation
$\adjOffsetK{\Kstack}{x}{\ell}$ means that the offsets for all $M$ and
$R$ pairs in the continuation stack $\Kstack$ are
adjusted. $\adjOffsetK{\Kstack}{x}{\ell}$ evaluates to a new
continuation stack that contains the adjusted renaming environments.

\section{Related Work and Conclusion}

The call-by-need calculus is due to Ariola et
al.~\cite{Ariola97callbyneed,Ariola95callbyneed,Maraist98callbyneed}. Garcia
et al.~\cite{Garcia09lazy} derive an abstract machine for Ariola and
Felleisen's calculus and, in the process, uncover a correspondence
between the by-need calculus and delimited control operations. Danvy
et al.~\cite{Danvy2010Defunctionalized} derive a machine similar to Garcia et al. by applying ``off-the-shelf'' transformations to the by-need calculus.

Our paper has focused on the binding structure of call-by-need
programs implied by Ariola and Felleisen's calculus. We have presented
the CK+ machine, which restructures the control stack of Garcia et
al.'s machine, and we have shown that lexical addresses can be used to
directly access binding sites for variables in this dynamic control stack, a 
first in the history of programming languages. The use of lexical addresses 
has also simplified hygiene maintenance by eliminating the need for the set of 
``active variables'' that is present in Garcia et al.'s machine states. In 
addition, we show how using indices in place of variables allows for simple 
maintenance of Garcia et al.'s ``well-formed'' machine states. Finally, we 
have presented a stack compaction algorithm, which is used in the CK+ machine 
to prevent stack overflow.
The compaction algorithm used in this paper is a restriction of the
more general garbage collection notion of reduction of Felleisen and
Hieb~\cite{Felleisen92Syntactic} and is also reminiscent of Kelsey's 
work~\cite{Kelsey93TailRecursive}.

\subsubsection*{Acknowledgments.} Thanks to the anonymous reviewers for their 
feedback and to Daniel Brown for inspiring discussions.

\bibliographystyle{splncs}
\bibliography{cbneed-tfp2010}

\end{document}